\newcommand\norm[1]{\left\lVert#1\right\rVert}
\title{Faster Lattice Enumeration}
\author{Mithilesh Kumar}
\institute{Simula UiB, Bergen, Norway}
\begin{document}

\maketitle

\begin{abstract}
A lattice reduction is an algorithm that transforms the given basis of the lattice to another lattice basis such that problems like finding a shortest vector and closest vector become easier to solve. Some of the famous lattice reduction algorithms are LLL and BKZ reductions. We define a class of bases called \emph{obtuse bases} and show that any lattice basis can be transformed to an obtuse basis in $\mathcal{O}(n^4)$ time. A shortest vector $\bm{s}$ can be written as $v_1\bm{b}_1+\cdots+v_n\bm{b}_n$ where $\bm{b}_1,\dots,\bm{b}_n$ are the input basis vectors and $v_1,\dots,v_n$ are integers. When the input basis is obtuse, all these integers can be chosen to be positive for a shortest vector. This property of the obtuse basis makes lattice enumeration algorithm for finding a shortest vector exponentially faster. Moreover, \emph{extreme pruning}, the current fastest algorithm for lattice enumeration, can be run on an obtuse basis.
\end{abstract}
\section{Introduction}
A lattice is specified by integer linear combinations of a linearly independent basis $\mathcal{B}=\{\bm{b}_1,\dots,\bm{b}_n\}$, i.e. $\Lambda=\{v_1\bm{b}_1+\dots+v_n\bm{b}_n\mid v_i\in \mathbb{Z}\}$. Given only $\mathcal{B}$, finding a shortest vector in the lattice is \textsf{NP}-hard under randomized reductions \cite{Ajtai}, implying that a polynomial-time algorithm for the same is unlikely. Because of the average-case hardness of the problem, it has become significant in cryptography. With no known quantum attacks, lattice based cryptography is prime candidate for post-quantum cryptography. Lattices have been studied in mathematics much before any cryptographic application became widespread and researchers have discovered many ways to find a shortest vector when only a basis for the lattice is given.
Well-known approach to solve this problem are \emph{lattice reduction}, \emph{sieving}, and \emph{lattice enumeration}. In lattice reduction, a unimodular transformation is applied to the lattice basis. This makes the basis vectors shorter and more orthogonal (angles between basis vectors are close to being $\pi/2$). Usually reduction algorithms run in polynomial-time and output a vector which is not necessarily a shortest vector. The most widely used lattice reduction algorithms are LLL \cite{lll} and BKZ \cite{bkz}. 

Enumeration and sieving algorithms try to find a shortest vector, but require an exponential running time \cite{sieve1,sieve2,disc1,disc2}. 
An enumeration algorithm tries to find a shortest vector $\bm{s}=\sum_{i=1}^nv_i\bm{b}_i$ by finding the unknown integer coefficients $v_i\in \mathbb{Z}$ by listing all lattice vectors in some sphere of radius $R$. 
This makes the running time super-exponential in the size of the input. In addition, the running time is highly sensitive to the quality of the lattice basis provided. Typically, the shorter and more orthogonal the lattice basis is, the faster will the enumeration be. Usually, a lattice reduction is used to preprocess the lattice basis. 
An enumeration algorithm can be described by a search tree in which each node corresponds to a lattice vector (not necessarily within the chosen sphere of radius $R$). This suggests another way to speed-up enumeration by throwing away parts of the search tree where the likelihood of finding a shortest vector is small. This approach is called \emph{pruning} \cite{SE94,SH95,GNR10}. Pruning speeds up enumeration at the cost of some probability of failure to return the shortest vector. Note that even after pruning, the running time of enumeration is exponential. Although lattice enumeration is nearly a brute-force search algorithm, it continues to remain a practical success.

\textbf{Our contribution:}
A path from the root node to a leaf in the seach tree representing an enumeration algorithm can been as an assignment for unknown coefficients $v_1,\dots, v_n$ in $\bm{s}=v_1\bm{b}_1+\cdots+v_n\bm{b}_n$. At each internal node, depending on variables already fixed, the algorithm computes an interval $I_i$ for some unknown variable $v_i$ and branches on each integer in $I_i$. Clearly the shorter the interval $I_i$, the lesser will the number of nodes in the search tree, and hence the faster will the enumeration be.

Recently, we\footnote{author names withheld for anonymity as the accepted paper is not in public domain yet} studied the effect of permutation of the basis on lattice enumeration, and proposed a heuristic algorithm to predict signs of coefficients $v_i$ for a possible shortest vector. Given these signs, the interval $I_i$ can be made shorter by keeping only those integers that have the predicted sign of $v_i$. 

In this paper, we provide a deterministic algorithm that in $\mathcal{O}(n^4)$ time decides whether signs of coefficients of the basis can be obtained. If this is not possible, the algorithm outputs another basis for which it would be possible to determine the signs. This algorithm needs to be run once before an enumeration algorithm. We show in Section \ref{enum} that there are roughly $\sqrt{n-1}/\sqrt{2\pi}R$ (where $R$ is the radius of the sphere in which the enumeration algorithm searches for a shortest vector) fraction of nodes at which the enumeration tries impossible values. Since this search tree has expenentially many nodes, the speed-up will be exponential. Moreover, \emph{extreme pruning} can still be applied in the enumeration algorithm. 

In Section \ref{prelims} we define common terminology and set notation. In Section \ref{red}, we describe the new reduction algorithm and in Section \ref{bound} we discuss how to bound the length of basis vectors returned by the algorithm in Section \ref{red}. We apply the reduction algorithm in the enumeration algorithm in Section \ref{enum}.
\section{Preliminaries}\label{prelims}
We denote vectors by boldface letters and scalars as normal letters. Sets are represented by capital letters. For inclusion or exlusion of a single element into or out of a set is denoted by $+$ or $-$ sign, i.e. $A\cup \{a\}$ is written as $A+a$ and $A\setminus \{a\}$ is written as $A-a$. 
A lattice is specified by a set of $n$ linear independent vectors in $\mathbb{R}^n$. Let $\mathcal{B}=\{\bm{b}_1,\dots,\bm{b}_{n}\}$ be a lattice basis. Then, the set of vectors in the lattice $\Lambda$ are integer linear combinations of $\mathcal{B}$. $$\Lambda=\{v_1\bm{b}_1+\cdots+v_{n}\bm{b}_{n}\mid \forall i\in [n] ~ v_i\in \mathbb{Z}\}$$
The inner product is the sum of the point-wise product of corresponding co-ordinates, i.e. $\bm{a}\cdot\bm{b}=\sum_{i=1}^na_ib_i$. We are interested in the Euclidean norm which is defined as $\norm{\bm{b}}=\sqrt{\bm{b}\cdot\bm{b}}$. 

A graph $G=(V,E)$ is specified by a set $V$ of points called vertices which are connected by a set $E\subseteq V\times V$ of edges. A clique or complete graph on $n$ vertices is a graph in which every pair $u,v\in V$ is connected via an edge $uv\in E$.
\section{New lattice reduction}\label{red}
In this section, we define a special class of basis vectors called \emph{obtuse bases} and show that it is possible to transform any basis to an obtuse basis in polynomial-time.
\begin{definition}[Obtuse basis]
	Let $\mathcal{B}:=\{\bm{b}_1,\dots,\bm{b}_n\}$ be a basis for lattice $\Lambda$. The basis $\mathcal{B}$ is called \emph{obtuse} if for all $\bm{b}_i\neq\bm{b}_j\in \mathcal{B}$, we have that $\bm{b}_i\cdot\bm{b}_j\leq 0$.
\end{definition}
We start with proving a simple yet powerful relation between a shortest vector and an obtuse basis.
	\begin{lemma}\label{samesign}
		Let $\mathcal{B}$ be an obtuse basis of a lattice $\Lambda$ and $\bm{s} = \sum_{i=1}^n v_i\bm{b}_i$ be a shortest vector where $\forall i\in [n]~ v_i\in \mathbb{Z}$. Then, $\forall i\in [n]~v_i\geq 0$ or $\forall i\in [n]~v_i\leq 0$.
	\end{lemma}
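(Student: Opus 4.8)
The plan is to argue by contradiction using the parallelogram-style expansion of the squared norm. Suppose some shortest vector $\bm{s}=\sum_{i=1}^n v_i\bm{b}_i$ has coefficients that are neither all $\ge 0$ nor all $\le 0$; then the index sets $P=\{i : v_i>0\}$ and $N=\{j : v_j<0\}$ are both nonempty and disjoint. I would split $\bm{s}$ along this sign pattern, writing $\bm{s}=\bm{p}-\bm{n}$ with $\bm{p}:=\sum_{i\in P}v_i\bm{b}_i$ and $\bm{n}:=\sum_{j\in N}(-v_j)\bm{b}_j$ (coefficients with $v_k=0$ contribute nothing). Both $\bm{p}$ and $\bm{n}$ are integer combinations of basis vectors, hence lattice vectors, and both are nonzero because $P,N\neq\emptyset$ and $\{\bm{b}_i\}$ is linearly independent.

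The heart of the argument is to observe that the cross term is nonpositive. Expanding, $\bm{p}\cdot\bm{n}=\sum_{i\in P}\sum_{j\in N}v_i(-v_j)(\bm{b}_i\cdot\bm{b}_j)$; since $P\cap N=\emptyset$, every pair $(i,j)$ in this double sum has $i\neq j$, so the obtuseness hypothesis gives $\bm{b}_i\cdot\bm{b}_j\le 0$, while $v_i(-v_j)>0$. Hence each summand is $\le 0$ and $\bm{p}\cdot\bm{n}\le 0$. Then $\norm{\bm{s}}^2=\norm{\bm{p}}^2+\norm{\bm{n}}^2-2\,\bm{p}\cdot\bm{n}\ge\norm{\bm{p}}^2+\norm{\bm{n}}^2>\norm{\bm{p}}^2$, the final strict inequality because $\bm{n}\neq\bm{0}$. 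So $\bm{p}$ is a nonzero lattice vector strictly shorter than $\bm{s}$, contradicting minimality of $\bm{s}$. Therefore no shortest vector can have both a strictly positive and a strictly negative coefficient, which is exactly the claim.

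I do not expect a genuine obstacle here; the argument reduces to a one-line norm computation, and the deeper content is really the definition of an obtuse basis doing its job on the cross term. The two points that need care are: (i) that $\bm{p}$ and $\bm{n}$ really are nonzero lattice vectors — this is the only place linear independence of $\mathcal{B}$ is used; and (ii) that every inner product $\bm{b}_i\cdot\bm{b}_j$ appearing in $\bm{p}\cdot\bm{n}$ involves distinct basis vectors, so that the defining inequality of an obtuse basis (stated only for $\bm{b}_i\neq\bm{b}_j$) applies term by term. As a remark, the same estimate in fact forces $\norm{\bm{p}}<\norm{\bm{s}}$ strictly, so no borderline equality case arises.
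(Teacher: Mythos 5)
Your proof is correct, and it takes a somewhat different route from the paper's. The paper expands $\norm{\bm{s}}^2=\sum_i v_i^2\norm{\bm{b}_i}^2+\sum_{i\neq j}v_iv_j\,\bm{b}_i\cdot\bm{b}_j$ and argues that, the cross inner products being negative, this quantity is "smallest possible" when all products $v_iv_j$ are nonnegative — implicitly comparing $\bm{s}$ with the same-magnitude vector $\sum_i|v_i|\bm{b}_i$; note it also asserts $\bm{b}_i\cdot\bm{b}_j<0$ strictly, which is stronger than the definition's $\bm{b}_i\cdot\bm{b}_j\leq 0$. You instead split $\bm{s}=\bm{p}-\bm{n}$ into its positive and negative parts and compare $\bm{s}$ with the shorter lattice vector $\bm{p}$, getting strictness from $\norm{\bm{n}}^2>0$ rather than from strict negativity of the inner products. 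This buys you two things: the argument works verbatim under the weak hypothesis $\bm{b}_i\cdot\bm{b}_j\leq 0$ (so the degenerate case of orthogonal pairs is handled cleanly, which the paper's "smallest possible" phrasing glosses over), and it produces an explicit contradiction witness ($\bm{p}$ is a nonzero lattice vector strictly shorter than $\bm{s}$) rather than an optimality-among-sign-patterns claim. The paper's version is shorter and makes the "same sign minimizes the quadratic form" intuition visible, but yours is the more airtight rendering of the same underlying mechanism.
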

\begin{proof}
Computing $\bm{s}\cdot\bm{s}$, we get $\norm{\bm{s}}^2 = \sum_{i = 1}^n v_i^2\norm{\bm{b}_i}^2 + \sum_{i\neq j} v_i v_j \bm{b}_i\cdot \bm{b}_j$. Since for all $i\neq j \quad\bm{b}_i\cdot \bm{b}_j < 0$, the above sum is the smallest possible when for all $i\neq j\quad v_i v_j \geq 0$. This implies that either $\forall i\in [n]~v_i\geq 0$ or $\forall i\in [n]~v_i\leq 0$. 
\end{proof}


However the original basis is not necessarily obtuse. 
A natural question is 'Can we transform the basis such that the new basis is obtuse?' A first attempt to get such a transformation would be to flip the sign of a vector i.e. $\bm{b}\to -\bm{b}$. If for any pair $\bm{b}_i,\bm{b}_j\in \mathcal{B}$ we get $\bm{b}_i\cdot\bm{b}_j>0$, then we may try replacing $\bm{b}_i$ by $-\bm{b}_i$ and check whether this makes the basis obtuse. Does there exist an algorithm that uses only sign-flip to make a given basis obtuse?

To investigate this, let us first construct the following \textbf{sign graph} $G:=(V,E)$ : For each basis vector $\bm{b}_i$, there is a corresponding node $a_i\in V$. For each pair of vertices $a_i, a_j$, the weight of edge $a_ia_j$ is defined as $wt(a_ia_j):=\bm{b}_i\cdot\bm{b}_j$. An edge is called \emph{positive} if its weight is a positive number and called \emph{negative} if its weight is a negative number. We color positive edges blue and negative edges red. Using this graph, we will find a sequence of transformations that make the basis obtuse. In further discussions, we identify vertices with basis vectors and vice versa.

Consider any algorithm that can at any step only flip the sign of any vector. We call it a \emph{sign-flip} algorithm.
In the next lemma we show that such an algorithm does not always succeed in making the basis obtuse.	
	\begin{lemma}\label{odd}
	Let $n$ be odd and ${n\choose 2} \equiv 1 \mod 4$. Let $x$ be the number of negative edges and $y$ be the number of positive edges. If $x-y\equiv -1 \mod 4$, then a sign-flip algorithm can not make all edges negative.
	\end{lemma}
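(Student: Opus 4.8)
The key observation is to track a quantity that is invariant (modulo $4$) under a single sign-flip. When we flip the sign of $\bm{b}_k$, every edge incident to $a_k$ changes weight sign: a positive edge becomes negative and vice versa. If $a_k$ currently has $p$ positive incident edges and $q$ negative incident edges (with $p+q = n-1$), then after the flip the total count of negative edges changes by $p - q$, i.e. $x \mapsto x - (p-q) + (q-p)\cdot 0$... more precisely $x$ increases by $p$ (newly-negative edges) and decreases by $q$ (newly-positive edges), so $x \mapsto x + p - q$ and $y \mapsto y - p + q$. Hence $x - y$ changes by $2(p-q)$. Since $p + q = n-1$ is even (as $n$ is odd), $p - q$ is even, so $2(p-q) \equiv 0 \pmod 4$. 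Therefore $x - y \bmod 4$ is invariant under every sign-flip.

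The plan is then immediate: a sign-flip algorithm that makes all edges negative would reach a configuration with $y = 0$ and $x = \binom{n}{2}$, so $x - y = \binom{n}{2} \equiv 1 \pmod 4$ by hypothesis. But the starting configuration has $x - y \equiv -1 \pmod 4$, and since $1 \not\equiv -1 \pmod 4$, the invariant is violated. Hence no sequence of sign-flips can reach the all-negative state.

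First I would state and prove the one-step invariance lemma: compute how $x$, $y$, and $x-y$ change under flipping $\bm{b}_k$, using $p+q = n-1$ even to conclude $x-y$ is preserved mod $4$. Next I would note that "all edges negative" forces $x - y = \binom{n}{2}$. Then I would combine the hypotheses $\binom{n}{2} \equiv 1 \pmod 4$ and $x - y \equiv -1 \pmod 4$ with the invariant to derive a contradiction. One should double-check the degenerate possibility that some edge has weight exactly $0$ (neither red nor blue); I would handle this by observing that a zero-weight edge stays an obstruction-free "negative-or-zero" edge in the obtuse sense, or simply restrict attention to the generic case where all inner products are nonzero, which is the setting the lemma implicitly assumes since $x + y = \binom{n}{2}$.

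I do not expect a serious obstacle here — the argument is a parity/counting invariant. The only mild subtlety is bookkeeping the sign of the change in $x - y$ correctly (it is $+2(p-q)$, and one must be careful that this is what vanishes mod $4$, not $x$ or $y$ individually), and being explicit that the hypotheses are stated precisely so that $\binom{n}{2} \pmod 4$ and the initial $x - y \pmod 4$ are incompatible.
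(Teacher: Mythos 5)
Your proof is correct and follows essentially the same route as the paper: both track the invariant $x-y \bmod 4$, show a flip changes it by $2(d^+-d^-)\equiv 0 \pmod 4$ because $d^++d^-=n-1$ is even, and derive the contradiction with $\binom{n}{2}\equiv 1 \pmod 4$ at termination. Your extra remark about zero-weight edges is a reasonable aside the paper does not address, but it does not change the argument.
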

\begin{proof}
Define $S=x-y$ for the pair $(x,y)$. Consider a step that flips the sign of a basis vector which has $d^+$ many positive edges and $d^-$ many negative edges incident on it. After the step, the number of negative edges will be $x'=x-d^-+d^+$ and the number of positive edges will be $y'=y-d^++d^-$. Then, $S'=x'-y'=x-y+2(d^+-d^-)$. Since, $d^++d^-=n-1\equiv 0 \mod 2$, we have $S'\equiv S\equiv -1 \mod 4$. The algorithm terminates when $S'={n\choose 2}$, i.e. $S'\equiv 1\mod 4$ which is impossible. Hence, the algorithm can not succeed.
\end{proof}
\begin{figure}[t]\label{fig:cliquepartition}
	\centering
	\includegraphics[width=0.3\textwidth]{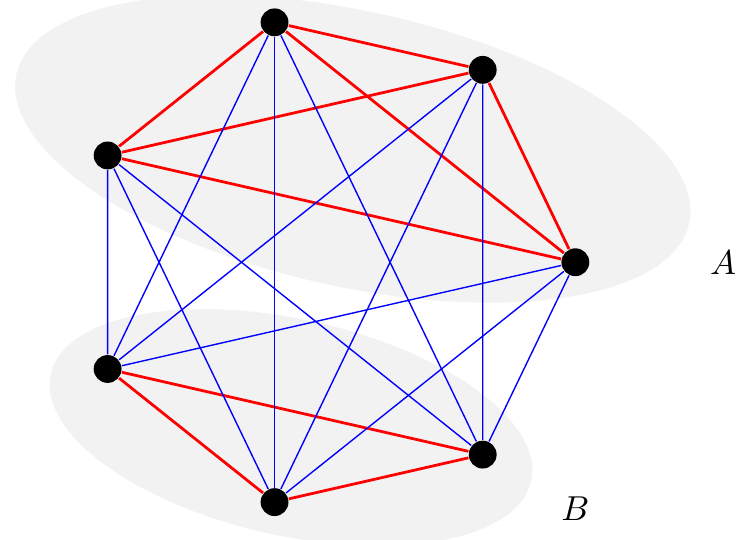}
	\caption{Complete graph $G=(V,E)$ in which the vertex set $V$ has been partitioned into $(A,B)$ such that the edges in graphs induced on $A$ and $B$ denoted by $G[A]$ and $G[B]$ contain only red edges. Every other edge is blue.}
\end{figure}
In the following lemma, we show a necessary and sufficient condition for a sign-flip algorithm to succeed.
	\begin{lemma}\label{clique}
		A sign-flip algorithm succeeds in making a basis obtuse if and only if the vertex set can be partitioned into two parts $(A,B)$ such that $G[A]$ and $G[B]$ have only negative edges and rest of the edges are positive.
	\end{lemma}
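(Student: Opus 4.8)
The plan is to reduce the behaviour of a sign-flip algorithm to a single combinatorial object, namely the set $S\subseteq V$ of vertices whose sign gets flipped an odd number of times over the whole run. First I would record the elementary fact underlying everything: replacing $\bm{b}_i$ by $-\bm{b}_i$ negates $\bm{b}_i\cdot\bm{b}_j$ for every $j\neq i$ and changes no other inner product, so in the sign graph a flip at $a_i$ flips the colour of exactly the edges incident to $a_i$ and leaves all other edge colours untouched. Consequently two flips at the same vertex cancel, flips at distinct vertices commute as far as edge colours are concerned, and after the algorithm terminates the colour of edge $a_ia_j$ is flipped precisely when exactly one of $a_i,a_j$ lies in $S$, and is unchanged otherwise. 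Since any subset $S$ is realisable (flip each of its vertices once), the statement ``a sign-flip algorithm can make the basis obtuse'' is equivalent to ``there is a set $S\subseteq V$ such that flipping the colours of exactly the edges of the cut $(S,V\setminus S)$ turns $G$ all-red''.

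For the forward direction, suppose such an $S$ exists and the resulting graph is all-red. Edges with both endpoints in $S$, or both in $V\setminus S$, are not flipped, so they were already red; edges with one endpoint on each side are flipped, so they were originally blue. Hence $A:=S$ and $B:=V\setminus S$ is exactly the claimed partition: $G[A]$ and $G[B]$ contain only red (negative) edges and all remaining edges are blue (positive). For the converse, given a partition $(A,B)$ with $G[A]$ and $G[B]$ all red and every edge across the cut blue, run the algorithm that flips each vertex of $A$ once (equivalently, each vertex of $B$): edges inside $A$ or inside $B$ keep their red colour, while every cut edge flips from blue to red, so the resulting basis is obtuse.

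The only point that needs care is the treatment of weight-zero edges, i.e. pairs with $\bm{b}_i\cdot\bm{b}_j=0$, which are neither blue nor red. I would either assume the basis is in sufficiently general position that no such edge occurs, or simply observe that a zero-weight edge is harmless: it already satisfies $\bm{b}_i\cdot\bm{b}_j\le 0$, stays $\le 0$ under any sign-flip, and so may be assigned to either side of the cut without affecting the argument. Beyond that, the proof is a direct unwinding of the parity bookkeeping, so I do not expect a genuine obstacle; in particular this characterisation subsumes Lemma \ref{odd}, which becomes the special case in which no admissible cut $(A,B)$ exists.
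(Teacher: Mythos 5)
Your proof is correct, and it takes a genuinely different route from the paper. You collapse an arbitrary run of a sign-flip algorithm to its parity class: since the final basis is $\{(-1)^{s_i}\bm{b}_i\}$ where $s_i$ is the parity of the number of flips at vertex $a_i$, the final sign graph depends only on the set $S$ of odd-parity vertices, and an edge changes colour exactly when it crosses the cut $(S,V\setminus S)$. Success is then literally equivalent to the existence of a cut whose inside edges are red and whose crossing edges are blue, and both directions of the lemma fall out by reading off $A=S$, $B=V\setminus S$ (and conversely flipping all of $A$ at once). The paper instead argues dynamically: for the ``if'' direction it migrates vertices of $A$ into $B$ one flip at a time, checking that the partition property is preserved, and for the ``only if'' direction it freezes an intermediate state (taking $B$ to be the vertices never flipped again) and rules out bad configurations by a case analysis, including a triangle-parity invariant to exclude a positive edge inside $A$. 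Your switching/parity argument is more economical and arguably more rigorous, since it avoids reasoning about ``vertices that will not be flipped again'' and makes the invariance under reordering of flips explicit; the paper's incremental argument has the minor advantage of directly exhibiting a concrete schedule of flips (which is what Algorithm 1 implements). Your remark on zero-weight edges is also apt: the paper's statement and proof tacitly assume no two basis vectors are orthogonal, and your convention (zero edges are already compatible with obtuseness and may be placed on either side) is the right fix. Your observation that the characterisation re-derives Lemma \ref{odd} matches the paper's own remark following the lemma.
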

\begin{proof}
Let $(A,B)$ be such a partition. See Figure \ref{fig:cliquepartition}. Red edges represent negative edges and blue edges represent positive edges. In this case we show that the algorithm will succeed. Let $a\in A$ and we flip its sign. Since, all edges between $a$ and vertices in $B$ were positive, they become negative after the flip. The clique $G[B+a]$ will have only negative edges. The edges between $a$ and vertices in $A$ were negative and will become positive after the flip. Hence, $a$ can be \emph{sent} to $B$ such that $|A|$ decreases by $1$ and the new partition $(A-a,B+a)$ still satisfies the property in the lemma. Proceeding in this way, every vertex from $A$ can be sent to $B$ implying the algorithm terminates successfully.

For the other direction, suppose now that the algorithm succeeds. At any intermediate stage of the algorithm, let us define $B$ to be the set of vertices whose signs will not be flipped again. Define the rest of the vertices to belong to $A$. We show that $(A,B)$ is a partition of the vertex set that satisfies the conditions in the lemma. Clearly, $G[B]$ must be a clique having only negative edges. \\
\textbf{Case 1}: There exists $a\in A$ such that there exist vertices $w,y\in B$ such that $aw$ and $ay$ have opposite sign. In this case, the signs of these edges will continue to remain opposite and hence the algorithm can not succeed contradicting the assumption. \\
The above case implies that edges from a vertex in $A$ to vertices in $B$ have the same color/sign. In case, these edges are red/negative, we send them into $B$ maintaining that $G[B]$ is a clique on red edges. This leaves with the situation that there are only positive edges between vertices of $A$ and vertices of $B$.\\
\textbf{Case 2}: There exists $w,y\in A$ such that $wy$ is positive. Then for any vertex $a\in B$ we have a triangle $awy$ of positive edges. We can not make this triangle red by sign flip. Hence, this case can not exist under the assumption that the algorithm succeeds. This concludes the proof of the lemma. 
\end{proof}
We can prove Lemma \ref{odd} using Lemma \ref{clique}. Suppose $V(G)$ could be partitioned into $(A,B)$ as required by Lemma \ref{clique}. Say $|A|=n_1$ and $|B|=n_2$. Then $n_1+n_2=n$ and at least one of $n_1$ and $n_2$ is even since $n$ is odd in Lemma \ref{odd}. Moreover, the number of negative edges (contained only in $G[A]$ and $G[B]$) are $x={n_1\choose 2}+{n_2\choose 2}={n_1+n_2\choose 2}-n_1n_2={n\choose 2}-n_1n_2$ and the number of positive edges (edges between $A$ and $B$) are $y=n_1n_2$. Their difference $x-y={n\choose 2}-2n_1n_2$. Hence $x-y\equiv {n\choose 2}~\mod 4\equiv 1~\mod 4$. Since Lemma \ref{odd} is provided with $x-y\equiv -1~\mod 4$, a sign-flip algorithm can not succeed.

Lemma \ref{clique} can be converted into the Sign-flip algorithm \ref{signflip} described below.

\begin{algorithm}[H]\label{signflip}
\DontPrintSemicolon
\KwInput{A basis $\mathcal{B}=\{\bm{b}_1,\dots,\bm{b}_n\}$}
\KwOutput{An obtuse basis $\mathcal{B}_o=\{\bm{b}'_1,\dots,\bm{b}'_n\}$}

Construct sign-graph $G$.\;
Pick any vertex $u\in V(G)$\;
Define $A\subseteq V(G)$ such that $u$ has only red edges to vertices in $A$, i.e. $A$ is the set of red neighbors of $u$.\;
Define $B\subseteq V(G)$ such that $u$ has only blue edges to vertices in $B$, i.e. $B$ is the set of blue neighbors of $u$\;
\eIf{any of $G[A]$ or $G[B]$ contain a blue edge} {Output FAIL.}{Flip the sign of every vector in $B$, i.e. $\forall~\bm{b}_i\in B~\bm{b}'_i:=-\bm{b}_i$}

Return basis $\mathcal{B}_o=\{\bm{b}'_1,\dots,\bm{b}'_n\}$
\caption{Sign-flip}
\end{algorithm}

In case, the Algorithm \ref{signflip} outputs FAIL, we would require to run some other algorithm. Now we proceed to develop an algorithm that succeeds to obtain an obtuse basis in all cases. Our task is to find a basis $\{\bm{b}'_i\}$ such that for all $i\neq j$ we have that $\bm{b}'_i\cdot\bm{b}'_j\leq 0$.
\paragraph{Description of Algorithm \ref{algo}:} Algorithm \ref{algo} starts with a sub-lattice (possibly of dimension $1$) which is obtuse (the clique $K$). It picks a basis vector in the complement of the current obtuse basis and changes it such that the resulting basis vector along with the current obtuse sub-lattice form a bigger obtuse sub-lattice. Once a basis vector is included in the obtuse sub-lattice, it is not modified again. This way at each iteration in the while loop, the dimension of the obtuse sub-lattice increases by one. 

In step 5, the algorithm expresses $\bm{b}'_j$ as a linear combination of basis vectors in $\{\bm{b}_j\}\cup K$ where the coefficient of $\bm{b}_j$ is fixed to be non-zero integer. Steps 6,7,8 are there to find the unknown coefficients $a_i$. In step 6, the system of linear inequalities are expressing the requirement that the new basis vector $\bm{b}'_j$ is obtuse to all basis vectors in $K$. Later we'll show that the matrix $A$ is invertible and hence $\bm{x}$ in step 7 is well defined. The unknown coefficients $\bm{v}$ are obtained by taking floor of entries in $\bm{x}$. It remains to be shown that this choice of $\bm{v}$ satisfies the system of linear inequalities in step 6. Assuming that it does, the algorithm extends $K$ to include the basis vector $\bm{b}'_j$ and the goes back to step 3. Finally, the algorithm outputs the required obtuse basis.

\begin{algorithm}[H]\label{algo}
\DontPrintSemicolon
\KwInput{A basis $\mathcal{B}=\{\bm{b}_1,\dots,\bm{b}_n\}$}
\KwOutput{An obtuse basis $\mathcal{B}_o=\{\bm{b}'_1,\dots,\bm{b}'_n\}$}

Reorder basis $\mathcal{B}$ such that the basis vectors in any maximal red clique $K$ in $G$ are at the end.\;
Initialize $\mathcal{B}_o$ by the basis vectors from $K$.\;
\While{$|\mathcal{B}_o|< n$}
{
	Let $\bm{b}_j\in G-K$ be the vector with largest number of red edges to $K$.\; 
	Define $\bm{b}'_j:=y_j\bm{b}_j+\sum_{\bm{b}_i\in K}v_i\bm{b}_i$ such that 	$y_j,v_i\in \mathbb{Z}$ and $y_j\neq 0$.\;
	Construct the system of linear inequalities defined by $$\forall~ \bm{b}_\ell\in K\quad \bm{b}'_j\cdot \bm{b}_\ell\leq 0$$ Equivalently, $$\forall~ \bm{b}_\ell\in K\quad \sum_{\bm{b}_i\in K}v_i\bm{b}_i\cdot \bm{b}_\ell\leq -y_j\bm{b}_j\cdot\bm{b}_\ell$$ which is of the form
$$A\bm{v}\leq y_j\bm{c}$$\;
	Compute $\bm{x}=y_jA^{-1}\bm{c}$.\;
   Take $\bm{v}$ to be the floor of the entries in $\bm{x}$.\;
	Choose $y_j\neq 0$ such that $\bm{b}'_j$ is as orthogonal to each $\bm{b}_i\in K$. This can be done by trying $y_j\in [-n,n]$\;
	Include $\bm{b}'_j$ into $\mathcal{B}_o$ and the corresponding vertex into the cliques $K$.
}	
Return basis $\mathcal{B}_o=\{\bm{b}'_1,\dots,\bm{b}'_n\}$
\caption{Compute obtuse basis}
\end{algorithm}

\paragraph{Correctness:} The correctness of the algorithm follows if we prove that $A$ is invertible and $\bm{a}$ obtained in step 8 satisfies the inequalities in step 6. Let us first show that $A$ is invertible. Consider the matrix $L = U^TU$. In this matrix $L_{ij} =  \bm{b}_i\cdot \bm{b}_j $. The matrix $A$ defined from the terms $\bm{b}_k\cdot \bm{b}_j$ in the above system is a sub-matrix of $L$. $A$ can also be seen as $V^TV$ where the rows/columns of $V$ are the basis vectors $\bm{b}_k$ for $i\leq k\leq n$. Since the rank of $V$ and $V^T$ is $n-i$, the rank of $A$ is $n-i$ and hence invertible. 

For step 8, we show that $A$ is an M-matrix.
\begin{definition}[M-matrix]
Let $A$ be a $n\times n$ real matrix such that $a_{ij}\leq 0$ for all $i\neq j, 1\leq i,j\leq n$ and $A$ is positive definite. Then $A$ is called an M-matrix.
\end{definition}
There are many properties and equivalent definitions of $M$-matrices \cite{PLEMMONS}. If $A$ is an $M$-matrix, then $A\bm{z}\geq 0$ implies that $\bm{z}\geq 0$ (i.e. every entry in $\bm{z}$ is non-negative). If $A$ is an $M$-matrix, then $A^{-1}\geq \bm{0}$ where $\bm{0}$ is zero matrix.


At any step, the matrix $A$ is symmetric with diagonal entries positive and the rest of the entries negative. In fact, $A$ is positive definite. This is because $A=V^TV$ and for any $\bm{0}\neq \bm{z}\in R^{n-i}$, we have that $\bm{z}^TA\bm{z}=\bm{z}^TV^TV\bm{z}=(V\bm{z})^T(V\bm{z})=\norm{V\bm{z}}^2>0$ as $V\bm{z}\neq \bm{0}$. This implies that all eigenvalues of $A$ are positive and hence $A$ is an $\bm{M}$-matrix. 
\begin{theorem}
Let $A$ be an M-matrix. Then, if $A\bm{v}\leq \bm{b}$, then $\bm{v}\leq A^{-1}\bm{b}$. In addition, if $\bm{u}\leq \bm{v}$, then $A\bm{u}\leq \bm{b}$.
\end{theorem}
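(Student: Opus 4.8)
The first assertion is the ``easy'' direction and should follow directly from the standard M-matrix property already invoked above, namely $A^{-1}\ge\bm 0$ (equivalently, $A\bm z\ge\bm 0\Rightarrow\bm z\ge\bm 0$). Starting from $A\bm v\le\bm b$, I would rewrite this as $\bm b-A\bm v\ge\bm 0$ and left-multiply by the entrywise-nonnegative matrix $A^{-1}$; since a nonnegative matrix maps a nonnegative vector to a nonnegative vector, $A^{-1}\bm b-\bm v\ge\bm 0$, i.e. $\bm v\le A^{-1}\bm b$. Alternatively, set $\bm z:=A^{-1}\bm b-\bm v$, note $A\bm z=\bm b-A\bm v\ge\bm 0$, and apply the implication $A\bm z\ge\bm 0\Rightarrow\bm z\ge\bm 0$. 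Either way this step is a one-liner once the M-matrix facts of \cite{PLEMMONS} are in hand.

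For the second assertion (read with $A\bm v\le\bm b$ still in force and $\bm u\le\bm v$ the new hypothesis), the obvious attempt is the chain $A\bm u\le A\bm v\le\bm b$, which reduces the claim to showing $A\bm u\le A\bm v$, i.e. $A\bm w\ge\bm 0$ for $\bm w:=\bm v-\bm u\ge\bm 0$. \emph{This is precisely where I expect the real difficulty to sit}: the needed statement ``$\bm w\ge\bm 0\Rightarrow A\bm w\ge\bm 0$'' is the \emph{converse} of the M-matrix implication used in part one, and it is false for a general M-matrix, since an M-matrix has non-positive (and typically nonzero) off-diagonal entries and hence is not monotone with respect to the componentwise order. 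So the second assertion cannot be obtained from the M-matrix hypothesis alone, and a correct proof must exploit extra structure available where the theorem is actually applied (step~7--8 of Algorithm~\ref{algo}).

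The structure I would try to bring in is twofold. First, in Algorithm~\ref{algo} the vector $\bm v$ is not arbitrary but is the entrywise floor of $\bm x=y_jA^{-1}\bm c$, so $\bm v\le\bm x<\bm v+\bm 1$; combined with $A\bm v-\bm b=A(\bm v-\bm x)$, one only has to control $A$ applied to the specific fractional vector $\bm v-\bm x\in(-\bm 1,\bm 0]$, which is far weaker than monotonicity on the whole orthant. Second, when the lattice is integral the Gram entries $\bm b_i\cdot\bm b_j$, hence $A$ and $\bm b=y_j\bm c$, are integers, and I would look for a rounding/parity argument (examining the residues of the coordinates of $A^{-1}\bm b$ modulo the relevant denominators) showing that flooring cannot push any coordinate of $A\bm v$ above the corresponding coordinate of $\bm b$. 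I would first test these ideas on $2\times 2$ and $3\times 3$ Gram sub-matrices; my expectation is that the clean statement needs either integrality of the data or a diagonal-dominance property of $A=V^TV$, and that without such a hypothesis one can construct an M-matrix and a pair $\bm u\le\bm v$ with $A\bm v\le\bm b$ but $A\bm u\not\le\bm b$. Accordingly, the step I would scrutinise hardest, and where I expect the argument (or the statement) to require adjustment, is the second sentence of the theorem.
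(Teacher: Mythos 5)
Your first paragraph reproduces the paper's own argument for the first assertion essentially verbatim (the paper sets $\bm{x}=A^{-1}\bm{b}$, notes $0\leq \bm{b}-A\bm{v}=A(\bm{x}-\bm{v})$, and invokes $A\bm{z}\geq 0\Rightarrow \bm{z}\geq 0$), so that part is correct and identical in approach. Your diagnosis of the second assertion is also exactly on target, and it is worth saying plainly: the paper's proof of that sentence is the fallacy you predicted. The paper argues ``$\bm{u}\leq\bm{v}$ and $\bm{v}\leq A^{-1}\bm{b}$, hence $A^{-1}\bm{b}-\bm{u}\geq 0$, hence $A\bm{u}\leq\bm{b}$''; the last step silently applies $A$ to the nonnegative vector $A^{-1}\bm{b}-\bm{u}$ and assumes the result is nonnegative, i.e. the converse implication $\bm{z}\geq 0\Rightarrow A\bm{z}\geq 0$, which fails precisely because an M-matrix has nonpositive off-diagonal entries. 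The statement itself is false, even for Gram matrices of obtuse bases: take $\bm{b}_1=(1,0)$, $\bm{b}_2=(-1,1)$, so that
\begin{equation*}
A=\begin{pmatrix}1 & -1\\ -1 & 2\end{pmatrix},\qquad \bm{b}=\bm{0},\qquad \bm{v}=\bm{0},\qquad \bm{u}=\begin{pmatrix}-1\\ 0\end{pmatrix};
\end{equation*}
then $A\bm{v}\leq\bm{b}$ and $\bm{u}\leq\bm{v}$, but $A\bm{u}=(-1,1)^T\not\leq\bm{0}$. So your refusal to accept ``$\bm{u}\leq\bm{v}\Rightarrow A\bm{u}\leq\bm{b}$'' is not a gap in your proposal but a genuine error in the theorem and in the paper's proof.

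Your instinct that a repair must use the extra structure of Algorithm \ref{algo} is the right one, but note that the paper supplies no such argument: what the algorithm actually needs is that $\bm{v}=\lfloor A^{-1}(y_j\bm{c})\rfloor$ satisfies $A\bm{v}\leq y_j\bm{c}$, equivalently that $A\bm{f}\geq 0$ for the fractional part $\bm{f}=\bm{x}-\bm{v}\in[0,1)^{|K|}$, and the same counterexample mechanism (e.g. $\bm{f}=(0.9,0)$ with the $A$ above gives $A\bm{f}=(0.9,-0.9)$) shows this can fail, so integrality or flooring alone does not rescue the claim; some additional hypothesis such as the diagonal dominance you mention, or a different choice of $\bm{v}$, would be needed. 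In short: part one of your proposal matches the paper; for part two you correctly identified that no proof from the M-matrix property alone exists, and the paper's purported proof indeed commits exactly that invalid step.
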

\begin{proof}
Since $A$ is an $M$-matrix, then $A\bm{z}\geq 0$ implies that $\bm{z}\geq 0$. Let $\bm{x}=A^{-1}\bm{b}$. Since $0\leq \bm{b}-A\bm{v}=A(\bm{x}-\bm{v}) $, we get that $\bm{x}-\bm{v}\geq 0$. Hence, $\bm{v}\leq A^{-1}\bm{b}$.

Let $\bm{u}\leq \bm{v}$. Using $\bm{b}-A\bm{v}\geq 0$ and $A^{-1}\geq 0$, we get that $A^{-1}\bm{b}-\bm{v}\geq 0$. This implies that $A^{-1}-\bm{u}\geq 0$ and hence $A\bm{u}\leq \bm{b}$.
\end{proof}
Hence, writing the above system of equations as $A\bm{v}\leq \bm{c}$, we get rational solutions $\bm{v}\leq A^{-1}\bm{c}$. This gives us an immediate solution $\lfloor A^{-1}\bm{c}\rfloor$. 

\paragraph{Running Time:} Each iteration of the while loop takes $\mathcal{O}(n^3)$ time. There are at most $n$ iterations in the while loop. Hence, the running time of the algorithm is bounded by $\mathcal{O}(n^4)$.
\section{Enumeration algorithm}\label{enum}
A shortest vector $\bm{s}$ in a given lattice $\Lambda$ can be obtained by enumerating all lattice vectors in a ball of some radius $R$. This seemingly brute-force search algorithm happens to be one of the fastest known algorithms for computing a shortest vector in a general lattice. Let us briefly describe the enumeration algorithm. Suppose $\bm{s}$ can be expressed as $\bm{s}=\sum_{i=0}^{n-1}v_i\bm{b}_i$ where $\bm{b}_i$ are basis vectors of $\Lambda$ and $v_i\in \mathbb{Z}$. The algorithm intends to list all vectors satisfying $\norm{\bm{s}}\leq R$. Using the Gram-Schmidt orthogonal basis $\mathcal{B}^*=\{\bm{b}^*_0,\dots,\bm{b}^*_{n-1}\}$, where $\bm{b}^*_i=\bm{b}_i-\sum_{j=0}^{i-1} \mu_{i,j}\bm{b}^*_{j}$ where $\mu_{i,j}=\frac{\bm{b}_i\cdot\bm{b}^*_j}{\bm{b}^*_j\cdot\bm{b}^*_j}$, $\bm{s}$ can be rewritten as $$\bm{s}=\sum_{i=0}^{n-1}\big (v_i+\sum_{j=i+1}^{n-1}\mu_{i,j}v_j\big )\bm{b}^*_i$$
Using that $\bm{b}^*_i\cdot\bm{b}^*_j=0$, we obtain the following inequality equivalent to $\norm{\bm{s}}\leq R$, $$\sum_{i=0}^{n-1}\bigg\vert v_i+\sum_{j=i+1}^{n-1}\mu_{i,j}v_j\bigg\vert^2\norm{\bm{b}^*_i}^2\leq R^2$$ Since each term in the above sum is non-negative, each term must be bounded by $R^2$. This observation leads to a search tree.
Levels in the search tree are numbered from $n-1$ down to $0$. Variable $v_{n-1}$ is evaluated at level $n-1$ and $v_{0}$ is evaluated at level $0$. The following recurrence relation is used to compute the range of allowed values for variable $v_{level}$:
\begin{equation}\label{eq:range}
\bigg\vert v_{level}+\sum_{k=level+1}^{n-1}\mu_{\tiny{level,k}}v_k\bigg\vert\leq \frac{\sqrt{R^2-\sum_{k=level+1}^{n-1}|v_k+\sum_{\ell=k+1}^{n-1}\mu_{k,\ell}v_\ell|^2\norm{\bm{b}^*_k}^2}}{\norm{\bm{b}^*_{level}}}
\end{equation}
Notice that the range for $v_{level}$ could contain both positive and negative integers. If we know \emph{a priori} that the input basis $\mathcal{B}$ is obtuse, we could simply discard the negative integers in the range for $v_{level}$ thereby cutting down the search tree significantly. Given a lattice basis $\mathcal{B}$, it is easy to check whether $\mathcal{B}$ is obtuse. In case $\mathcal{B}$ is obtuse, we choose only non-negative integers in the enumeration. If $\mathcal{B}$ is not obtuse, we check whether the Sign-flip algorithm \ref{signflip} can make the basis obtuse. If this fails as well, we run Algorithm \ref{algo} to obtain a new basis $\mathcal{B}_o$ which is obtuse.

Next we show that this leads to an exponential speed-up in the enumeration algorithm. At any node, if the computed range contains only negative integers, then we can recurse back immediately. This will happen whenever 
\begin{equation}\label{eq:negatives}
 \sum_{k=level+1}^{n-1}\mu_{\tiny{level,k}}v_k > \frac{\sqrt{R^2-\sum_{k=level+1}^{n-1}|v_k+\sum_{\ell=k+1}^{n-1}\mu_{k,\ell}v_\ell|^2\norm{\bm{b}^*_k}^2}}{\norm{\bm{b}^*_{level}}}
\end{equation}

If equation \ref{eq:negatives} is satisfied by exponentially many values of $\{v_{level},\dots,v_{n-1}\}$, we get exponential speed-up with this observation itself. Instead of estimating the number of such values, we estimate the fraction of nodes in the search tree for which the ranges contain both positive and negative integers. We call nodes where both positive and negative integers are in the range as \emph{good nodes}.

Variable $v_{level}$ can have both negative and positive values only when 
\begin{equation}\label{eq:bothsigns}
\bigg\vert \sum_{k=level+1}^{n-1}\mu_{\tiny{level,k}}v_k\bigg\vert < \frac{\sqrt{R^2-\sum_{k=level+1}^{n-1}|v_k+\sum_{\ell=k+1}^{n-1}\mu_{k,\ell}v_\ell|^2\norm{\bm{b}^*_k}^2}}{\norm{\bm{b}^*_{level}}}
\end{equation}
Note that Equation \ref{eq:bothsigns} is simply Equation \ref{eq:range} with $v_{level}=0$. 

To find the number of \emph{good nodes}, we need to find the number of assignments of $\{v_{level-1},\dots v_{n-1}\}$ satisfying Equation \ref{eq:bothsigns}. Equivalently, we need to find the number of integer points in the volume enclosed by Equation \ref{eq:range} with $v_{level}=0$.

Rearranging Equation \ref{eq:range}:
\begin{equation}
\sum_{k=level}^{n-1}|v_k+\sum_{\ell=k+1}^{n-1}\mu_{\ell,k}v_\ell|^2\norm{\bm{b}^*_k}^2\leq R^2
\end{equation}
Define $u_k:=v_k+\sum_{\ell=k+1}^{n-1}\mu_{k,\ell}v_\ell$ and $r_k:=\norm{\bm{b}^*_k}$. This gives us the following form
\begin{equation}\label{eq:ellipsoid}
\sum_{k=level}^{n-1}|u_k|^2r_k^2\leq R^2
\end{equation}
Equation \ref{eq:ellipsoid} is an equation of an ellipsoid in $n-level$ dimensions in variables $u_k$. With 
$$T:=\begin{bmatrix}
1 & \mu_{level,level+1} &  \mu_{level,level+2}& \dots & \mu_{level,n-1}\\
0    & 1 &\mu_{level+1,level+2} & \dots & \mu_{level+1,n-1}\\
\vdots & \vdots &\vdots &\vdots &\vdots\\
0 & \dots &\dots &\dots & 1
\end{bmatrix}
$$
we can write $\bm{u}=T\bm{v}$. Note that $T$ is an orthogonal Gram-Schmidt transform. Next consider the following sphere 
\begin{equation}\label{eq:sphere}
\sum_{k=level}^{n-1} |w_k|^2\leq R^2
\end{equation}
Equation \ref{eq:sphere} represents a sphere of dimension $n-level$. With $$S:=\begin{bmatrix}
r_{level} & 0 & \dots & 0\\
0    & r_{level+1} & \dots & 0\\
\vdots & \vdots &\vdots &\vdots\\
0 & \dots &\dots & r_{n-1}
\end{bmatrix}$$
we can write $\bm{w}=S\bm{u}$. Chaining these two linear transformations we get
$$\bm{w}=ST\bm{v}$$
Substituting $w_{level}=0$ into Equation \ref{eq:sphere} gives us a sphere in dimension $n-level-1$. Now, the volume of an $n-level-1$-dimensional ball is $$V(n-level-1,R)=\frac{\pi^{\frac{n-level-1}{2}}}{\Gamma(\frac{n-level-1}{2}+1)}R^{n-level-1}$$ 
The volume of the region enclosed by Equation \ref{eq:bothsigns} is given by $$V_{n-level-1}=V(n-level-1,R)det((ST)^{-1}) = \frac{\pi^{\frac{n-level-1}{2}}}{\Gamma(\frac{n-level-1}{2}+1)}\frac{R^{n-level-1}}{\Pi_{k=level}^{n-1}\norm{b}^*_k}$$
Notice that the volume of region enclosed by Equation \ref{eq:range} is given by $$V_{n-level}=V(n-level,R)det((ST)^{-1}) = \frac{\pi^{\frac{n-level}{2}}}{\Gamma(\frac{n-level}{2}+1)}\frac{R^{n-level}}{\Pi_{k=level}^{n-1}\norm{b}^*_k}$$
Hence the fraction of good nodes is $$\frac{V_{n-level-1}}{V_{n-level}}=\frac{\Gamma(\frac{n-level}{2}+1)}{\Gamma(\frac{n-level-1}{2}+1)}\frac{1}{\sqrt{\pi}R}\approx\frac{\sqrt{\frac{n-level-1}{2}}}{\sqrt{\pi}R}=\frac{\sqrt{n-level-1}}{\sqrt{2\pi}R}$$

\begin{theorem}\label{thm:main}
Given any basis for a lattice in dimension $n$, the fraction of nodes at which both negative and positive values are evaluated is roughly $\sqrt{n-1}/\sqrt{2\pi}R$ where $R$ is the radius of the sphere for which the enumeration algorithm is run.
\end{theorem}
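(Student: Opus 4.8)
The plan is to pin down exactly which nodes of the enumeration tree are counted and then replace an integer-point count by a volume. First I would observe that a node at level $level$ is fixed by the partial assignment $(v_{level+1},\dots,v_{n-1})$, and that the interval $I_{level}$ produced by Equation \ref{eq:range} contains both a positive and a negative integer essentially exactly when $v_{level}=0$ lies in the interior of that interval; this is precisely the condition of Equation \ref{eq:bothsigns}. So the number of \emph{good} nodes at level $level$ equals the number of integer partial assignments $(v_{level+1},\dots,v_{n-1})$ satisfying Equation \ref{eq:bothsigns}, and the total number of nodes at that level is the number of integer assignments $(v_{level},\dots,v_{n-1})$ satisfying Equation \ref{eq:range}.

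Next I would apply the Gaussian heuristic: the number of integer points inside a (full-dimensional) convex body is approximately its volume. The region cut out by Equation \ref{eq:range} is, after the change of coordinates $u_k := v_k+\sum_{\ell>k}\mu_{k,\ell}v_\ell$, the ellipsoid of Equation \ref{eq:ellipsoid}; composing the unit upper-triangular Gram--Schmidt map $T$ (so $\det T=1$) with the diagonal scaling $S$ with entries $r_k=\norm{\bm{b}^*_k}$ (so $\det S=\prod_{k=level}^{n-1}\norm{\bm{b}^*_k}$) exhibits this ellipsoid as the preimage of the radius-$R$ ball under $ST$. Hence its volume is $V(n-level,R)\,\det((ST)^{-1}) = V(n-level,R)\big/\prod_{k=level}^{n-1}\norm{\bm{b}^*_k}$, which is $V_{n-level}$. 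Imposing the extra linear constraint coming from $v_{level}=0$ drops one dimension, turning the radius-$R$ ball into a radius-$R$ ball of one lower dimension and giving $V_{n-level-1}=V(n-level-1,R)\big/\prod_{k=level}^{n-1}\norm{\bm{b}^*_k}$.

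Then the fraction of good nodes at level $level$ is the ratio $V_{n-level-1}/V_{n-level}$; the products $\prod\norm{\bm{b}^*_k}$ cancel, leaving $\frac{\Gamma(\frac{n-level}{2}+1)}{\Gamma(\frac{n-level-1}{2}+1)}\cdot\frac{1}{\sqrt{\pi}R}$. Applying the standard Gamma asymptotic $\Gamma(x+1)/\Gamma(x+\tfrac12)\approx\sqrt{x}$ with $x=\frac{n-level-1}{2}$ collapses the Gamma ratio to $\sqrt{(n-level-1)/2}$, so the fraction is $\approx\sqrt{n-level-1}/(\sqrt{2\pi}R)$. Since $n-level-1\le n-1$ for every level, this is at most $\sqrt{n-1}/(\sqrt{2\pi}R)$ at every level, with equality approached at the root $level=0$; averaging over levels (or simply taking the worst level) therefore gives the claimed fraction $\sqrt{n-1}/(\sqrt{2\pi}R)$ for the whole tree.

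The main obstacle is rigor rather than calculation: the step identifying an integer-point count with a volume is only the Gaussian heuristic (which is why the statement says ``roughly''), and it degrades exactly when $R$ is small or the ellipsoid is thin in some direction, i.e.\ when some $\norm{\bm{b}^*_k}$ is large relative to $R$. A careful version would need an error term controlled by the successive minima of the integer lattice under $ST$, or a restriction to the regime where $R$ is large enough that every one-dimensional fibre of the body contains many integers; the Gamma asymptotic is the only other approximation and is harmless for $n-level$ not too small. A secondary subtlety is that the ``fraction of nodes in the tree'' is a sum over levels of good-node counts divided by a sum of total-node counts rather than a single ratio, but since the level-wise fractions are all $O(\sqrt{n}/R)$ and bounded by $\sqrt{n-1}/(\sqrt{2\pi}R)$, the stated bound survives this aggregation unchanged.
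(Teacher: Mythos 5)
Your proposal is correct and follows essentially the same route as the paper: identify good nodes with the condition $v_{level}=0$ satisfying Equation \ref{eq:bothsigns}, pass through the $ST$ change of variables to a ball, take the ratio $V_{n-level-1}/V_{n-level}$, and apply the Gamma-function asymptotic to get $\sqrt{n-level-1}/(\sqrt{2\pi}R)$, maximized at $level=0$. Your added remarks on the Gaussian heuristic's validity and on aggregating the level-wise ratios make explicit caveats the paper leaves implicit, but the underlying argument is the same.
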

Notice that in the search tree, whenever a node with negative value is encountered, the enumeration algorithm discards the entire sub-tree rooted at this node. A simple example would a case in which every interval in the enumaration is $\{-1,0,1\}$. The size of such a tree is $3^n$. If we discard all nodes with value $-1$, the tree size reduces to $2^n$, an exponential decrease in the size of the seach tree that can not be expressed as $c3^n$ for any constant $c$. 

To get some insight, we assume that all intervals have the same length $I$ at each node is the enumeration. This means that there are $I^n$ nodes in the search tree. In the lattice enumeration, at the nodes of type listed by Theorem \ref{thm:main}, we reduce the size of the interval by half on average: for symmetric intervals $[-a,a]$ we discard $[-a,-1]$, for skewed intervals like $[-a-r,a-r]$ and $[-a+r,a+r]$ we discard on an average $a$-many values. Theorem \ref{thm:main} implies that there are approximately $1/2\sqrt{2\pi}R$ fraction of nodes at each level in the search tree at which we get to decrease the interval size by half. To bound the number of nodes, we can see this as removing $1/2\sqrt{2\pi}R$ fraction of intervals at half of nodes of the search tree or cutting down all intervals by a factor $1/4\sqrt{2\pi}R$. Hence, the effective interval size is $\alpha I$ where $\alpha=1-1/4\sqrt{2\pi}R$. Hence, the new tree size is $\alpha^nI^n$ which is an exponential decrease in the size of the tree.

The above paragraph indicates that our algorithm will gain expenential-speed up over standard enumeration. Notice that extreme pruning can be applied on obtuse basis as well.


\section{Discussion}\label{bound}
The coefficients $\bm{v}$ obtained in Algorithm \ref{algo} are not necessarily optimal. 
It would be nice if we have some bound on how long the modified vectors could be.
Let us consider the problem geometrically. Given basis vectors $\mathcal{B}_d=\{\bm{b}_1,\dots,\bm{b}_d\}$, the task is to find the set of vectors that are obtuse to $\mathcal{B}_d-\bm{b}_d$ and in the linear expansion of vectors in terms of $\mathcal{B}_d$, the coefficient of $\bm{b}_d$ is non-zero. This last condition is needed to make sure that the resultant vector is linearly independent of $\{\bm{b}_1,\dots,\bm{b}_{d-1}\}$.

For each vector $\bm{b}_i\in \mathcal{B}_d-\bm{b}_d$, let $P_i$ denote the halfspace $\bm{b}_i\cdot \bm{x}\leq 0$. The region of interest for us is $C=\bigcap_i P_i$. Notice that $C$ also contains the span of $\mathcal{B}_d-\bm{b}_d$. It may happen that the length bound we get is likely due to a vector in the span of $\mathcal{B}_d-\bm{b}_d$. To compensate for this, we can increase the length bound by a factor of 2. As $C$ is intersection of convex objects, it is convex. Each pairwise intersection $P_i\cap P_j$ gives an edge of $C$ passing through the origin, i.e. we have ${d-1\choose 2}$ such edges. Consider the intersection of a sphere $S$ of radius $r$ centered at the origin with $C$. A bound on the radius $r$ such that the volume of this intersection is equal to the volume of the fundamental parallelepiped of $\Lambda$ gives us a bound on the length of a lattice vector which is obtuse to $\mathcal{B}_d-\bm{b}_d$. In case, $\mathcal{B}_d$ is orthogonal, the volume of $S\cap C$ is $\frac{Vol(S)}{2^{d}}$(because there are $2^d$ equal quadrants).
It is reasonable to assume that in general volume of $S\cap C$ will be proportional to $$\frac{Vol(S)}{2^d}\frac{\Omega}{\Omega_0}$$ where $\Omega$ is the solid angle that the vertex of $C$ at the origin makes and $\Omega_0$ is the solid angle of a quadrant. Equating this to the volume of the fundamental parallelepiped gives a bound on $r$. We leave a rigorous bound for the length of basis vectors for future work.

Future work includes studying how \emph{obtuse} LLL-reduced and BKZ-reduced basis is? There is likely a better algorithm to make a basis obtuse with a provable guarantee in the quality (in terms of basis vector lengths and orthogonality). What happens when BKZ and Algorithm \ref{algo} are  interleaved with multiple iterations? An implementation of this approach is needed to access its impact. Whether the notion of obtuse basis is relevant for sieving algorithms needs to be investigated as well.



\bibliography{references}
\bibliographystyle{splncs04}

\end{document}